\newtheorem{lemma}{Lemma}[section]
\newtheorem{cor}[lemma]{Corollary}
\newtheorem{thm}[lemma]{Theorem}
\newcommand{\qedd}{\hfill{$\blacksquare$}}
\newcommand{\paren}[1]{\left(#1\right)}
\newcommand{\bracket}[1]{\left[#1\right]}
\newcommand{\set}[1]{\left\{#1\right\}}
\newcommand{\abs}[1]{\left|#1\right|}
\newcommand{\norm}[1]{\left\|#1\right\|}
\newcommand{\bff}[1]{{\bf #1}}
\DeclareMathOperator{\rank}{rank}
\DeclareMathOperator{\kernel}{kernel}
\newcommand{\R}{\mathbb R}
\newcommand{\C}{\mathbb C}
\newcommand{\ol}[1]{\overline{#1}}
\newcommand{\calE}{\mathcal{E}}
\newcommand{\calG}{\mathcal{G}}
\newcommand{\calH}{\mathcal{H}}
\newcommand{\calL}{\mathcal{L}}
\newcommand{\calN}{\mathcal{N}}
\newcommand{\lguo}[1]{\ifthenelse{\boolean{showcomments}}
{ \textcolor{red}{(Daniel says:  #1)}}{}}
\newcommand{\slow}[1]{\ifthenelse{\boolean{showcomments}}
{ \textcolor{blue}{(Steven says:  #1)}}{}}
\begin{document}
\title{Graph Laplacian Spectrum and Primary Frequency Regulation}

\IEEEoverridecommandlockouts

\IEEEoverridecommandlockouts
\author{Linqi Guo, Changhong Zhao, and 
	Steven H.~Low \thanks{Linqi Guo and Steven H.~Low are with the Department of Computing and Mathematical Sciences, California Institute of Technology, Pasadena, CA, 91125. Email: \texttt{\{lguo, slow\}@caltech.edu}. Changhong Zhao is with the National Renewable Energy Laboratory, Golden, CO, 80401. Email: \texttt{Changhong.Zhao@nrel.gov}.}
\thanks{The authors thank Janusz Bialek and Oleg Khamisov from Skoltech for helpful discussions. This work has been supported by Resnick Research Fellowship, Linde Institute Research Award, DOE through the ENERGISE program (Award \#DE-EE-0007998), NSF grants through CCF 1637598, ECCS 1619352, CNS 1545096, ARPA-E grant through award DE-AR0000699 (NODES) and GRID DATA, DTRA through grant HDTRA 1-15-1-0003 and Skoltech through collaboration agreement 1075-MRA.}}
\maketitle

\begin{abstract}
We present a framework based on spectral graph theory that captures the interplay among network topology, system inertia, and generator and load damping in determining the overall grid behavior and performance. Specifically, we show that the impact of network topology on a power system can be quantified through the network Laplacian eigenvalues, and such eigenvalues determine the grid robustness against low frequency disturbances. Moreover, we can explicitly decompose the
frequency signal along scaled Laplacian eigenvectors when damping-inertia ratios are uniform across buses. The insight revealed by this framework partially explains why load-side participation in frequency regulation not only makes the system respond faster, but also helps lower the system nadir after a disturbance. Finally, by presenting a new controller specifically tailored to suppress high frequency disturbances, we demonstrate that our results can provide useful guidelines in the controller design for load-side primary frequency regulation. This improved controller is simulated on the IEEE 39-bus New England interconnection system to illustrate its robustness against high frequency oscillations compared to both the conventional droop control and a recent controller design.

\end{abstract}

\IEEEpeerreviewmaketitle

\section{Introduction}\label{section:intro}
Frequency regulation balances the power generation and consumption in an electric grid. Such control is governed by the swing dynamics and is traditionally implemented in generators through droop control, automatic generation control and economic dispatch \cite{zhao2014design, zhao2016unified}. It has been widely realized in the community that the increasing level of renewable penetration makes it harder to stabilize the system due to higher generation volatility and lower aggregate inertia. One popular approach to maintaining system stability in this new era is to integrate load-side participation \cite{kirby2003spinning,pnnl2012grid,trudnowski2006decentralized,short2007stabilization,
callaway2011achieving,molina2011decentralized,andreasson2014distributed,dorfler2017gather,bouattour2013further}, which not only helps stabilize the system in a more responsive and scalable fashion, but also improves the system transient behavior \cite{zhao2012swing,zhao2014design,zhao2016unified}.  We redirect the readers to \cite{molzahn2017survey} for an extensive survey on recent frequency regulation controller designs.

The benefits of load-side controllers motivated a series of work on understanding how different system parameters and controller designs impact the grid transient performance. For instance, iDroop is proposed in \cite{mallada2016idroop} to improve dynamic performance of the power system through controlling power electronics or loads. Such controllers, however, can sometimes make the power system dynamics more sophisticated and uncertain and hence make it harder to obtain a stability guarantee \cite{pates2016decentralized}. In \cite{poola2017optimal}, methods to determine the optimal placement of virtual inertia in power grids to accommodate loss of system stability are proposed and studied. There has also been work on characterizing the synchronization cost of the swing dynamics \cite{dorfler2010spectral,tegling2015price,pirani2017system,paganini2017global,coletta2017performance} that explicitly computes the response $\calH_2$ norm in terms of system damping, inertia, resistive loss, line failures etc. In certain cases, classical metrics studied in power engineering such as nadir and maximum rate of change of frequency can also be analytically derived \cite{paganini2017global}.

Compared to the aforementioned system parameters, the role of transmission topology on the transient stability of swing dynamics is less well understood. Indeed, it is usually hard to infer without detailed simulation and computation on how a change to the network topology affects overall grid behaviour and performance. For example, one can argue that the connectivity in the grid helps average the power demand imbalance over the network and therefore higher connectivity should enhance system stability. On the other hand, one can also argue that higher connectivity means faster propagation of disturbances over the network, and should therefore decrease system stability. Both arguments seem plausible but they lead to (apparently) opposite conclusions (a corollary of our results in Section \ref{section:characterization} will clarify this paradox). In fact, even the notion ``connectivity'' itself seems vague and can be interpreted in different ways.

In this work, we present a framework based on spectral graph theory that captures the interplay among network topology, system inertia, and generator and load damping. Compared to existing literature \cite{mallada2016idroop,pates2016decentralized,poola2017optimal,dorfler2010spectral,tegling2015price,pirani2017system,paganini2017global,coletta2017performance} that usually relies on system norms in the analysis, our approach studies more directly the frequency trajectory itself through a decomposition along scaled Laplacian spectrum. This allows us to arrive at certain insights that are either difficult to derive or are overlooked within existing methods. Our contributions can be summarized as follows: a) We show that whether the system oscillates or not is determined by how strong the damping normalized by inertia is compared to network connectivity in the ``corresponding'' direction; b) We prove that the power grid robustness against low frequency disturbance is mostly determined by network connectivity, while its robustness against high frequency disturbance is mostly determined by system inertia; c) We demonstrate that although increasing system damping helps suppressing disturbences, such benefits are mostly in the medium frequency band; d) We devise a quantitative explanation on why load-side participation helps improve system behavior in the transient state, and demonstrate how our results suggest an improved controller design that can suppress input noise much more effectively.

The rest of this paper is organized as follows. In Section \ref{section:model}, we review the system model and relevant concepts from spectral graph theory. In particular, we provide a rigorous definition on the ``strength'' of connectivity. In Section \ref{section:characterization}, we present our characterization of the system response in both the time and Laplace domain. The practical interpretations of our results are given in Section \ref{section:interpretations}. In Section \ref{section:design}, we quantify the benefits of load-side controllers and present a new controller that is specifically tailored to suppress high frequency oscillation. In Section \ref{section:evaluation}, we simulate the improved controller on the IEEE 39-bus New England interconnection testbed and illustrate its robustness against measurement noise and high frequency oscillation in injection. We conclude in Section \ref{section:conclusion}.

\section{Network Model}\label{section:model}
In this section, we present the system model as adopted in \cite{zhao2014design} and review relevant concepts from spectral graph theory.

Let $\R$ and $\C$ denote the set of real and complex numbers, respectively. For two matrices $A, B$ with proper dimensions, $[A~B]$ means the concatenation of $A, B$ in a row, and $[A; B]$ means the concatenation of $A, B$ in a column. A variable without subscript usually denotes a vector with appropriate components, e.g., $\omega=(\omega_j, j\in\calN)\in\R^{\abs{\calN}}$. For a time-dependent signal $\omega(t)$, we use $\dot{\omega}$ to denote its time derivative $\frac{d\omega}{dt}$. The identity matrix of dimension $n\times n$ is denoted as $I_n$. The column vector of length $n$ with all entries being $1$ is denoted as $\bff{1}_n$. The imaginary unit $\sqrt{-1}$ is denoted as $\mathrm{j}$.

We use a weighted graph $\calG=(\calN, \calE)$ to describe the power transmission network, where $\calN=\set{1,\ldots, n}$ is the set of buses and $\calE\subset\calN \times \calN$ denotes the set of transmission lines weighted by its line susceptances. The terms bus/node and line/edge are used interchangeably in this paper. We assume without loss of generality that $\calG$ is connected and simple. An edge in $\calE$ is denoted either as $e$ or $(i,j)$. We further assign an arbitrary orientation over $\calE$ so that if $(i,j)\in\calE$ then $(j,i)\notin\calE$.

Let $n, m$ be the number of buses and transmission lines respectively. The incidence matrix of $\calG$ is the $n\times m$ matrix $C$ defined as
$$
C_{je}=\begin{cases}
  1 & \text{if node }j\text{ is the source of }e\\
  -1 & \text{if node }j\text{ is the target of }e\\
  0 &\text{otherwise}
\end{cases}
$$
For each bus $j\in\calN$, we denote its frequency deviation
as $\omega_j$ and denote the inertia constant as $M_j>0$. The symbol $P_j^m$ is overloaded to denote the mechanical power injection if $j$ is a generator bus and denote the aggregate power injection from uncontrollable loads if $j$ is a load bus. For a generator bus, we absorb the droop control into a damping term $D_j\omega_j$ with $D_j\ge 0$ and for load buses, we use the same symbol to denote the aggregated frequency sensitive load. For each transmission line $(i,j)\in\calE$, denote as $P_{ij}$ the branch flow deviation and denote as $B_{ij}$ the line susceptance assuming voltage magnitudes are 1 p.u. With such notations, the linearized swing and network dynamics are given by
\begin{subequations}\label{eqn:swing_and_network_dynamics}
\begin{IEEEeqnarray}{rCll}
  M_j\dot{\omega}_j&=&-D_j\omega_j - d_j + P_j^m - \sum_{e\in\calE}C_{je}P_e,&\quad j\in\calN\\
  \dot{P}_{ij}&=&B_{ij}(\omega_i-\omega_j),\quad& (i,j)\in\calE \label{eqn:network_flow_dynamics}
\end{IEEEeqnarray}
\end{subequations}
We refer the readers to \cite{zhao2014design} for more detailed justification and derivation of this model. 

Now using $x$ to denote the system state $x=[\omega; P]$, and putting $M$, $D$ and $B$ to be the diagonal matrices with $M_j$, $D_j$ and $B_{ij}$ as diagonal entries respectively, we can rewrite the system dynamics \eqref{eqn:swing_and_network_dynamics} in the state-space form
\begin{IEEEeqnarray}{rCl}
\dot{x}&=&\begin{bmatrix}
  -M^{-1}D & -M^{-1}C \\
  BC^T & 0
\end{bmatrix}x +
\begin{bmatrix}
  M^{-1} \\ 0
\end{bmatrix}(P^m-d)\label{eqn:state_space}
\end{IEEEeqnarray}
The matrix
$$
A=\begin{bmatrix}
  -M^{-1}D & -M^{-1}C \\
  BC^T & 0
\end{bmatrix}
$$
is referred to as the system matrix in the sequel. The system \eqref{eqn:state_space} can be interpreted as a multi-input-multi-output linear system with input $P^m-d$ and output $x$. We emphasize that the variables $[\omega;P]$ denote deviations from their nominal values so that $x(t)=0$ means the system is in its nominal state at time $t$.

For any node $i\in\calN$, we denote the set of its neighbors as $N(i)$. The (scaled) graph Laplacian matrix of $\calG$ is the $n\times n$ symmetric matrix $L=M^{-1/2}CBC^TM^{-1/2}$, which is explicitly given by
$$
L_{ij}=\begin{cases}
-\frac{B_{ij}}{\sqrt{M_iM_j}} & i\neq j, (i,j)\in\calE \text{ or } (j,i)\in\calE \\
\frac{1}{M_i}\sum_{j:j\in N(i)}B_{ij} & i=j, \abs{N(i)}>0\\
0 & \text{otherwise}
\end{cases}
$$
It is well known that if the graph $\calG$ is connected, then $L$ has rank $n-1$, and any principal minor of $L$ is invertible \cite{chung1997spectral}. For any vector $x\in\R^{n}$, we have
$$
x^TLx=\sum_{(i,j)\in\calE}B_{ij}\paren{\frac{x_i}{\sqrt{M_i}}-\frac{x_j}{\sqrt{M_j}}}^2\ge 0
$$
This implies that $L$ is a positive semidefinite matrix and thus diagonalizable. We denote its eigenvalues and corresponding orthonormal eigenvectors as
$0=\lambda_1 < \lambda_2\le\cdots\le \lambda_n$ and $v_1, v_2,\cdots,v_n$. When the matrix $L$ has repeated eigenvalues, for each repeated eigenvalue $\lambda_i$ with multiplicity $m_i$, the corresponding eigenspace of $L$ always has dimension $m_i$, hence an orthonormal basis consisting of eigenvectors of $L$ exists (yet such bases are not unique). We assume one of the possible orthonormal bases is chosen and fixed throughout the paper.

The eigenvalues of the graph Laplacian matrix measure the graph connectivity from an algebraic perspective, and larger Laplacian eigenvalues suggest stronger connectivity.  To make such discussion more concrete, we define a partial order $\preceq$ over the set of all weighted graphs (possibly disconnected) with vertex set $\calN$ as follows: For two weighted graphs $\calG_1=(\calN, \calE_1)$ and $\calG_2=(\calN, \calE_2)$, we say $\calG_1\preceq \calG_2$ if $\calE_1\subset \calE_2$, and for any $e\in\calE_1$, the weight of $e$ in $\calG_1$ is no larger than that in $\calG_2$. It is routine to check that $\preceq$ defines a partial order\footnote{We emphasize that this is not a complete order over all graphs. That is, not any pair of graphs of the same number of vertices are comparable through this order.}. A more interesting result is that the mapping from a graph to its Laplacian eigenvalues preserves this order:
\begin{lemma}
Let $L_1$ and $L_2$ be the (scaled) Laplacian matrices of two weighted graphs $\calG_1$ and $\calG_2$ with $\calG_1\preceq \calG_2$. Let $0=\lambda_1^1\le \lambda_2^1\le \cdots \le \lambda_n^1$ and $0=\lambda_1^2\le \lambda_2^2\le \cdots \le \lambda_n^2$ be the eigenvalues of $L_1$ and $L_2$ respectively. Then
$$
\lambda_i^1 \le \lambda_i^2, i = 1, 2, \ldots, n
$$
\end{lemma}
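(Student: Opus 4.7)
The plan is to reduce the claim to the standard monotonicity principle for eigenvalues of symmetric matrices by showing that $L_2 - L_1$ is positive semidefinite, and then applying the Courant--Fischer min-max characterization.

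First I would write the scaled Laplacian in its natural edge-sum form. Using $C B C^T = \sum_{e = (i,j) \in \calE} B_{ij}(\bff{e}_i - \bff{e}_j)(\bff{e}_i - \bff{e}_j)^T$, where $\bff{e}_k$ is the $k$-th standard basis vector, I get
\begin{equation*}
L = M^{-1/2} \paren{\sum_{e=(i,j) \in \calE} B_{ij}(\bff{e}_i - \bff{e}_j)(\bff{e}_i - \bff{e}_j)^T} M^{-1/2}.
\end{equation*}
Since $\calG_1$ and $\calG_2$ share the same vertex set $\calN$ and therefore the same scaling matrix $M$, the difference $L_2 - L_1$ splits into two sums: one over edges $e \in \calE_1$ with weights $B_{ij}^{(2)} - B_{ij}^{(1)} \ge 0$, and one over edges $e \in \calE_2 \setminus \calE_1$ with weights $B_{ij}^{(2)} \ge 0$. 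Each summand is of the form $(M^{-1/2}(\bff{e}_i - \bff{e}_j))(M^{-1/2}(\bff{e}_i - \bff{e}_j))^T$ scaled by a nonnegative coefficient, hence positive semidefinite. Consequently $L_2 - L_1 \succeq 0$.

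The second step invokes the Weyl monotonicity inequality (a direct consequence of the Courant--Fischer min-max theorem): if $S, T$ are symmetric with $T \succeq 0$, then $\lambda_i(S + T) \ge \lambda_i(S)$ for every $i$, where eigenvalues are arranged in nondecreasing order. Applying this with $S = L_1$ and $T = L_2 - L_1$ yields $\lambda_i^2 \ge \lambda_i^1$ for all $i = 1, \ldots, n$, as desired.

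The argument is essentially routine; the only subtle point is making sure the scaling by $M^{-1/2}$ does not disrupt the PSD ordering, and this is guaranteed precisely because $M$ is the same for both graphs (it depends only on the shared vertex set, not on the edges). Thus no real obstacle arises, and the proof fits in a few lines once the edge decomposition of $CBC^T$ is in place.
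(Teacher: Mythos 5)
Your proposal is correct and follows the same route as the paper: the paper's entire proof is the observation that $L_2 - L_1$ is positive semidefinite (from the definition of $\preceq$) combined with eigenvalue monotonicity, and you have simply filled in the details via the edge-sum decomposition of $CBC^T$ and the Weyl/Courant--Fischer inequality. Your remark that the common scaling matrix $M$ (determined by the shared vertex set) preserves the PSD ordering is the right point to check and is handled correctly.
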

\begin{proof}
This result follows from the fact that $L_2-L_1$ is positive semidefinite, which is easy to check from our definition of $\preceq$.
\end{proof}
In fact we can devise better estimates on the relative orders of the eigenvalues $\lambda_i^j$. Interested readers are refered to \cite{guo2017monotonicity} for more discussions and results therein. Throughout this paper, whenever we compare two graphs in terms of their connectivity, we always refer to the partial order $\preceq$.

In the sequel, we further assume that the inertia and damping of the buses are proportional to its power ratings. That is, we assume there is a baseline inertia $\mu$ and damping $\delta$ such that for each generator $j$ with power rating $f_j$, we have $M_j=f_j \mu$ and $D_j=f_j \delta$. This is a natural setting as machines with high ratings are typically ``heavy'' and have more significant impact on the overall system dynamics. See \cite{paganini2017global, coletta2017performance} for more details. Under such assumptions, the ratios $D_j/M_j$ is independent of $j$, and therefore $M^{-1}D=\gamma I_n$ where $\gamma=\delta/\mu>0$. We will study both the transmission graph Laplacian matrix and Laplace domain properties of \eqref{eqn:state_space}. To clear potential confusion, we agree that whenever the adjective Laplacian is used, we refer to quantities related to the Laplacian matrix $L$, while whenever the noun Laplace is used, we refer to notions about the Laplace transform
$$
\calL\set{s(t)}(\tau):=\int_0^\infty s(t)e^{-\tau t} dt
$$
or notions defined in the Laplace domain.

\section{Characterization of System Response}\label{section:characterization}
In this section, we give a complete characterization of the system response of \eqref{eqn:state_space} based on spectral decomposition in both time and Laplace domain.

\subsection{Stability under zero input}
We first determine the modes of the system \eqref{eqn:state_space}. That is, we compute the eigenvalues of the system matrix $A$. Such eigenvalues indicate whether the system is stable and if it is, how fast the system converges to an equilibrium state.

\begin{thm}\label{thm:modes}
Let $0=\lambda_1< \lambda_2\le \cdots \le \lambda_n$ be the eigenvalues of $L$ with corresponding orthonormal eigenvectors $v_1, v_2, \ldots, v_n$. Then:
\begin{enumerate}
\item $0$ is an eigenvalue of $A$ of multiplicity $m-n+1$. The corresponding eigenvectors are of the form $[0; P]$ with $P\in\kernel(C)$
\item $-\gamma$ is a simple eigenvalue of $A$ with $\bracket{M^{-1/2}v_1;0}$ as a corresponding eigenvector
\item For $i=2,3,\ldots, n$, $\phi_{i,\pm}=\frac{-\gamma \pm\sqrt{\gamma^2-4\lambda_i}}{2}$ are eigenvalues of $A$. For any such $\phi_{i,\pm}$, an eigenvector is given by $\bracket{M^{-1/2}v_i; \phi_{i,\pm}^{-1}BC^TM^{-1/2}v_i}$.
\end{enumerate}
\end{thm}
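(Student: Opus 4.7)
The plan is to verify each of the three claimed eigenpairs by direct substitution into $Au=\phi u$, and then close the argument by a dimension count that exhausts the spectrum. Under the standing assumption $M^{-1}D=\gamma I_n$, the system matrix takes the clean block form
$$
A \;=\; \begin{bmatrix} -\gamma I_n & -M^{-1}C \\ BC^T & 0 \end{bmatrix},
$$
which is what drives all three calculations.

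For part (1), any vector of the form $[0;P]$ with $CP=0$ is immediately in $\kernel A$: the top block gives $-M^{-1}CP=0$ and the bottom block gives $BC^T\cdot 0=0$. Connectivity of $\calG$ forces $\rank C=n-1$, so $\dim\kernel(C)=m-n+1$, matching the claimed multiplicity (a lower bound for now; completeness is settled at the end). For part (2), the essential observation is that the Laplacian null direction satisfies $M^{-1/2}v_1\propto\bff{1}_n$, because $CBC^T\bff{1}_n=0$ follows from $C^T\bff{1}_n=0$. Consequently $BC^TM^{-1/2}v_1=0$, so $A[M^{-1/2}v_1;0]=[-\gamma M^{-1/2}v_1;0]$, confirming the asserted eigenpair.

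The substantive calculation is part (3). The idea is motivated by writing a generic eigenvector as $[x;y]$: the bottom block equation $BC^T x=\phi y$ forces $y=\phi^{-1}BC^T x$ whenever $\phi\neq 0$, and plugging this into the top block equation reduces everything to an eigenvalue equation for $M^{1/2}x$ with respect to $L$, together with the relation $\lambda_i=-\phi(\gamma+\phi)$. This is exactly what the ansatz $w_i(\phi)=[M^{-1/2}v_i;\phi^{-1}BC^TM^{-1/2}v_i]$ encodes: its bottom block equation holds tautologically, and its top block equation telescopes, via $Lv_i=\lambda_i v_i$, to the scalar quadratic
$$
\phi^2+\gamma\phi+\lambda_i=0,
$$
whose roots are precisely $\phi_{i,\pm}$. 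The ansatz requires $\phi\neq 0$, which is automatic for $i\geq 2$ since then $\lambda_i>0$ rules out $\phi=0$ as a root.

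The closing step is a counting argument: $(m-n+1)+1+2(n-1)=m+n=\dim A$, so the three groups of eigenvalues already fill the spectrum provided they do not secretly share values. Any overlap $0=\phi_{i,\pm}$ or $-\gamma=\phi_{i,\pm}$ would force $\lambda_i=0$ for some $i\geq 2$, contradicting connectivity, so the enumeration is clean. The only remaining subtlety is the degenerate case $\gamma^2=4\lambda_i$, where $\phi_{i,+}=\phi_{i,-}$; this is absorbed harmlessly into the eigenvalue count and does not affect the stated claim (it would matter for diagonalizability, but none is asserted). The main obstacle, such as it is, is bookkeeping rather than ingenuity: keeping block multiplications and the $M^{\pm 1/2}$ factors straight, which is precisely what the ansatz in part (3) is engineered to streamline.
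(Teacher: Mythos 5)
Your proof is correct in substance but takes a genuinely different route from the paper's. The paper first computes the characteristic polynomial of $A$ via a Schur complement, obtaining $\det(A-tI)=(-1)^{m+n}t^{m-n}\det\paren{L+(\gamma t+t^2)I_n}$, which delivers all eigenvalues \emph{with their algebraic multiplicities} up front; only afterwards does it solve the block equations to find the eigenvectors (including a direct argument that every eigenvector for $\phi=0$ must have $\omega=0$ and $P\in\kernel(C)$). You instead verify each claimed eigenpair by substitution and then close with the count $(m-n+1)+1+2(n-1)=m+n$. Your route is more elementary and avoids determinant manipulation, but it buys exact multiplicities (and the ``these are \emph{all} the eigenvectors of $0$'' direction of part 1) only when the exhibited eigenvectors genuinely span $\C^{m+n}$, i.e.\ when $\gamma^2\neq 4\lambda_i$ for all $i$. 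In the critical-damping case your count produces only one eigenvector for the double root $-\gamma/2$, so the counting argument no longer certifies that $0$ has multiplicity exactly $m-n+1$ or that $-\gamma$ is simple; your claim that this is ``absorbed harmlessly'' is the one step that needs the characteristic polynomial (or an equivalent algebraic-multiplicity computation) to be made rigorous. Since the paper itself excludes critical damping in its subsequent analysis this is a minor caveat, but the Schur-complement computation is precisely what makes the paper's version of the multiplicity statements unconditional.
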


\begin{proof}
\iftoggle{isreport}{
Please refer to Appendix \ref{proof:modes}.
}{
Please refer to our online report \cite{report}.
}
\end{proof}

When $m-n+1=0$ or equivalently when the network is a tree, item 1) of Theorem \ref{thm:modes} is understood to mean that the system matrix $A$ does not have $0$ as an eigenvalue. We remark that a similar characterization of the system \eqref{eqn:state_space} under different state representation can be found in  \cite{coletta2017performance}.

 Assuming $\gamma^2-4\lambda_i\neq 0$ for all $i$, we get $2n-1$ nonzero eigenvalues of $A$ from item 2) and item 3) of Theorem \ref{thm:modes}, counting multiplicity, which together with the $m-n+1$ multiplicity from item 1) gives $m+n$ eigenvalues as well as $m+n$ linearly independent eigenvectors. Therefore we know $A$ is always diagonalizable over the complex field $\C$, provided critical damping, that is $\gamma^2-4\lambda_i=0$ for some $i$, does not occur. We assume this is the case in all following derivations. When critical damping does occur, our results can be generalized using the standard Jordan decomposition.

Theorem \ref{thm:modes} explicitly reveals the impact of the transmission network connectivity as captured by its Laplacian eigenvalues on the system \eqref{eqn:state_space} and tells us that the system mode shape is closely related to the corresponding Laplacian eigenvectors. In particular, we note that the real parts of $\phi_{i,\pm}$ are nonpositive, from which we deduce the following corollary.

\begin{cor}\label{cor:stability}
The system \eqref{eqn:state_space} is marginally stable, with marginal stable states of the form $[0;P]$ with $P\in\kernel(C)$. Therefore the system \eqref{eqn:state_space} is asymptotically stable on a tree.
\end{cor}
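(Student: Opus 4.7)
The plan is to leverage Theorem \ref{thm:modes}, which already enumerates every eigenvalue and eigenvector of the system matrix $A$, and then verify the two standard conditions for marginal stability of a linear system: (i) every eigenvalue of $A$ has nonpositive real part, and (ii) each eigenvalue lying on the imaginary axis is semisimple, i.e., its geometric multiplicity equals its algebraic multiplicity.

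First, I would dispose of the nonzero eigenvalues. The simple eigenvalue $-\gamma$ from item 2 of Theorem \ref{thm:modes} is strictly negative since $\gamma=\delta/\mu>0$. For the pairs $\phi_{i,\pm}=\tfrac{-\gamma\pm\sqrt{\gamma^2-4\lambda_i}}{2}$ with $i\ge 2$, I would split into two cases. In the underdamped regime $\gamma^2<4\lambda_i$, the square root is purely imaginary, so $\mathrm{Re}(\phi_{i,\pm})=-\gamma/2<0$. In the overdamped regime $\gamma^2>4\lambda_i$, the square root is a real number strictly less than $\gamma$ because $\lambda_i>0$ for $i\ge 2$ (recall that connectedness of $\calG$ forces $\lambda_1=0$ to be the unique zero Laplacian eigenvalue); hence both $\phi_{i,+}$ and $\phi_{i,-}$ are strictly negative.

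Second, I would analyze the zero eigenvalue. Item 1 of Theorem \ref{thm:modes} states that $0$ has algebraic multiplicity $m-n+1$ and exhibits an entire family of corresponding eigenvectors $[0;P]$ with $P\in\kernel(C)$. Since $\calG$ is connected, the incidence matrix $C$ has rank $n-1$, so $\dim\kernel(C)=m-(n-1)=m-n+1$. Thus the geometric multiplicity of the zero eigenvalue equals its algebraic multiplicity, i.e., $0$ is semisimple and no Jordan block is attached to it. Combined with the first step, every eigenvalue of $A$ has nonpositive real part and every eigenvalue on the imaginary axis is semisimple, so the system is marginally stable. Because the marginally stable states coincide with $\kernel(A)$, they are precisely the vectors $[0;P]$ with $P\in\kernel(C)$, as claimed.

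Third, the tree case follows immediately: if $\calG$ is a tree then $m=n-1$, so $m-n+1=0$ and the zero eigenvalue does not occur at all. Every eigenvalue of $A$ then has strictly negative real part, which by standard linear systems theory yields asymptotic stability. I do not anticipate a serious obstacle, since Theorem \ref{thm:modes} already supplies the full spectral decomposition; the only points requiring care are the comparison $\sqrt{\gamma^2-4\lambda_i}<\gamma$ in the overdamped regime (which uses $\lambda_i>0$) and confirming that the kernel of $C$ has exactly the right dimension so that algebraic and geometric multiplicities of $0$ agree.
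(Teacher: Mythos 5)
Your proof is correct and follows essentially the same route as the paper, which deduces the corollary directly from Theorem \ref{thm:modes} by observing that all $\phi_{i,\pm}$ have nonpositive real parts and that the only imaginary-axis eigenvalue is $0$, with eigenvectors $[0;P]$, $P\in\kernel(C)$. Your explicit verification that the zero eigenvalue is semisimple (geometric multiplicity $m-n+1$ matching the algebraic one) is a detail the paper handles implicitly via the diagonalizability of $A$ established just before the corollary.
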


The kernel of $C$ corresponds to the set of branch flow vectors $P$ such that $\sum_{j\in N(i)} P_{ij}=0$ for all $i\in\calN$. They can be interpreted as flows that are balanced at all the buses (e.g.~ circulation flows on a loop) for which each bus $i$ is neither a source node ($\sum_{j\in N(i)} P_{ij} >0$) nor a sink node ($\sum_{j\in N(i)} P_{ij} <0$). This corollary tells us that the only possible signals that can persist in \eqref{eqn:state_space} are the balancing branch flows. Of course, such marginally stable flows cannot exist in a real system because of losses in transmission lines (in which case our network dynamics \eqref{eqn:network_flow_dynamics} is no longer accurate). Even if we take the simplified model \eqref{eqn:state_space}, as long as the initial system branch flow does not belong to $\kernel(C)$, the system \eqref{eqn:state_space} under zero input $P^m-d=0$ converges to the nominal state.
%

\subsection{System response to step input}
Next we determine the system response to a step function. More precisely, we define $s(t):=P^m(t)-d(t)$ as the surplus function and compute the frequency trajectory $\omega(t)$ with $s(t)$ as input to \eqref{eqn:state_space}, assuming $s(t)$ takes constant value $s$ over time. The components $s_j$ can be different over $j$. We put $s=\sum_{i}\hat{s}_iM^{1/2}v_i$ to be the decomposition of $s$ along the scaled Lapalacian eigenvectors (note the decomposition scaling $M^{1/2}v_i$ is different from the scaling $M^{-1/2}v_i$ in the following theorem statement).

\begin{thm}\label{thm:response}
Let $0=\lambda_1< \lambda_2\le \cdots\le \lambda_n$ be the eigenvalues of $L$ with corresponding orthonormal eigenvectors $v_1, v_2,\ldots, v_n$. Assume:
\begin{enumerate}
\item The system \eqref{eqn:state_space} is initially at the nominal state $x(0)=0$
\item $\gamma^2-4\lambda_i\neq 0$ for all $i$.
\end{enumerate}
Then
\begin{IEEEeqnarray}{rCl}
\omega(t)
&=&\sum_{i=1}^n\frac{\hat{s}_i}{\sqrt{\gamma^2-4\lambda_i}}\paren{e^{\phi_{i,+}t}-e^{\phi_{i,-}t}}M^{-1/2}v_i \label{eqn:omega_evolution}
\end{IEEEeqnarray}
where
$$
\phi_{i,+}:=\frac{-\gamma+\sqrt{\gamma^2-4\lambda_i}}{2} \quad \phi_{i,-}:=\frac{-\gamma-\sqrt{\gamma^2-4\lambda_i}}{2}
$$
$ $  
\end{thm}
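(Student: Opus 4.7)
The plan is to reduce the problem to $n$ decoupled two-dimensional linear ODEs, one per Laplacian mode, and solve each in closed form. The scaling $M^{1/2}$ in the decomposition of $s$ is precisely what is needed to make the Laplacian $L = M^{-1/2}CBC^TM^{-1/2}$ appear naturally, so the first move is a change of variables that absorbs $M$.

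First I would introduce $y := M^{1/2}\omega$ and rewrite \eqref{eqn:swing_and_network_dynamics} as
\begin{IEEEeqnarray*}{rCl}
\dot{y} &=& -\gamma y - M^{-1/2}CP + M^{-1/2}s,\\
\dot{P} &=& BC^T M^{-1/2} y,
\end{IEEEeqnarray*}
using $M^{-1}D = \gamma I_n$. By hypothesis $s = \sum_{i} \hat{s}_i M^{1/2} v_i$, so $M^{-1/2}s = \sum_i \hat{s}_i v_i$. Define the modal coordinates $\tilde{y}_i(t) := v_i^T y(t)$ and the auxiliary coordinates $\tilde{P}_i(t) := v_i^T M^{-1/2} C P(t)$. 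Taking the inner product of the first ODE with $v_i$ and differentiating the definition of $\tilde{P}_i$ using the second ODE gives
\begin{IEEEeqnarray*}{rCl}
\dot{\tilde{y}}_i &=& -\gamma \tilde{y}_i - \tilde{P}_i + \hat{s}_i,\\
\dot{\tilde{P}}_i &=& v_i^T M^{-1/2} C B C^T M^{-1/2} y \;=\; v_i^T L y \;=\; \lambda_i \tilde{y}_i,
\end{IEEEeqnarray*}
where the last step uses $Lv_i = \lambda_i v_i$ and the symmetry of $L$. This is the key step: the projection $\tilde{P}_i$ is exactly the right companion to $\tilde{y}_i$ to close a $2\times 2$ subsystem.

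The resulting planar system has coefficient matrix $\left[\begin{smallmatrix}-\gamma & -1 \\ \lambda_i & 0\end{smallmatrix}\right]$ with characteristic polynomial $\phi^2 + \gamma \phi + \lambda_i = 0$, whose roots are $\phi_{i,\pm}$. For $i=1$, $\lambda_1 = 0$ forces $\tilde{P}_1 \equiv 0$ (by the zero initial condition) and reduces the problem to a scalar first-order ODE whose solution $\tilde{y}_1(t) = \frac{\hat{s}_1}{\gamma}(1-e^{-\gamma t})$ coincides with the formula since $\sqrt{\gamma^2 - 4\lambda_1} = \gamma$ and $\phi_{1,+} = 0$, $\phi_{1,-} = -\gamma$. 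For $i \geq 2$, I solve by writing $\tilde{y}_i(t) = A_+ e^{\phi_{i,+}t} + A_- e^{\phi_{i,-}t}$ and using the particular solution $(\tilde{y}_i,\tilde{P}_i) = (0,\hat{s}_i)$ at steady state. Imposing $\tilde{y}_i(0)=0$ yields $A_- = -A_+$, and using $\dot{\tilde{P}}_i = \lambda_i \tilde{y}_i$ together with $\phi_{i,+}\phi_{i,-} = \lambda_i$ and $\tilde{P}_i(0) = 0$ gives $A_+ = \hat{s}_i/(\phi_{i,+} - \phi_{i,-}) = \hat{s}_i/\sqrt{\gamma^2 - 4\lambda_i}$.

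Finally, reassembling $\omega = M^{-1/2} y = \sum_i \tilde{y}_i(t)\, M^{-1/2} v_i$ delivers \eqref{eqn:omega_evolution}. The only nontrivial step is the algebraic closure of the $2\times 2$ modal system, which relies on the identity $M^{-1/2}CBC^TM^{-1/2} = L$ and on $v_i$ being a Laplacian eigenvector; everything else is routine integration. Assumption 2 ensures $\phi_{i,+}\neq\phi_{i,-}$ so that the closed-form denominator $\sqrt{\gamma^2 - 4\lambda_i}$ is nonzero, and the formula extends by continuity to the critically damped case through a Jordan-block treatment, as noted after Theorem \ref{thm:modes}.
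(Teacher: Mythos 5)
Your proof is correct, and it takes a genuinely different route from the paper's. The paper leans on Theorem \ref{thm:modes}: it diagonalizes the full $(m+n)\times(m+n)$ system matrix $A$, expands the forcing direction $[M^{-1}s;0]$ in the eigenvectors $z_{i,\pm}=[M^{-1/2}v_i;\phi_{i,\pm}^{-1}BC^TM^{-1/2}v_i]$ with coefficients $\lambda_{i,\pm}\hat{s}_i$, and evaluates the variation-of-constants integral block by block (including the $\kernel(C)$ block, which must then be argued to contribute nothing to $\omega$). You instead project the dynamics onto each Laplacian mode: the pair $(\tilde{y}_i,\tilde{P}_i)=(v_i^TM^{1/2}\omega,\,v_i^TM^{-1/2}CP)$ closes into an autonomous $2\times 2$ linear ODE with characteristic polynomial $\phi^2+\gamma\phi+\lambda_i$, which you solve with zero initial data. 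The key observation --- that $\dot{\tilde{P}}_i=v_i^TLy=\lambda_i\tilde{y}_i$, so the $m$-dimensional flow variable never needs to be resolved --- is exactly what makes the reduction work, and your determination of the constants is right: $A_-=-A_+$ from $\tilde{y}_i(0)=0$ and $A_+=\hat{s}_i/(\phi_{i,+}-\phi_{i,-})$ from $\tilde{P}_i(0)=0$ (equivalently from $\dot{\tilde{y}}_i(0)=\hat{s}_i$); the $i=1$ case checks out since $\sqrt{\gamma^2-4\lambda_1}=\gamma$. What your approach buys: it never touches the eigenstructure of $A$ on $\kernel(C)$, it is self-contained (Theorem \ref{thm:modes} is not needed), and the same $2\times2$ subsystem immediately yields the spectral transfer functions of Theorem \ref{thm:laplace} upon replacing the constant input by $e^{\tau t}$. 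What the paper's approach buys: it recycles the eigendecomposition already established and produces the full state trajectory $x(t)$, branch flows included, rather than only $\omega(t)$.
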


\begin{proof}
\iftoggle{isreport}{
Please refer to Appendix \ref{proof:response}.
}{
Please refer to our online report \cite{report}.
}
\end{proof}

We remark that all conditions in this theorem are for presentation simplicity and the frequency trajectory \eqref{eqn:omega_evolution} can be generalized by adding correction terms to the case where neither condition is imposed. We opt to not doing so here as these terms lead to more tedious notations yet do not reveal any new insights. 

This result tells us that the frequency trajectory of \eqref{eqn:state_space} can be decomposed along scaled eigenvectors of the Laplacian matrix $L$. Moreover, we note that all $\phi_{i,\pm}$ have negative real parts except $\phi_{1,+}=0$. Therefore the only term in \eqref{eqn:omega_evolution} that persists is the term involving $\phi_{1,+}$ given as:
\begin{IEEEeqnarray*}{rCl}
&&\frac{\hat{s}_1}{\sqrt{\gamma^2-4\lambda_1}}e^{\phi_{1,+}t}M^{-1/2}v_1=\frac{\hat{s}_1}{\gamma}M^{-1/2}v_1
\end{IEEEeqnarray*}
Thus under the input $s=P^m-d$, the $\omega(t)$ signal converges to the steady state $\frac{\hat{s}_1}{\gamma}M^{-1/2}v_1$ exponentially fast. This allows us to recover the following well-known result in frequency-regulation literature \cite{janusz2008book}, using a new argument.

\begin{cor}\label{cor:balancing_injection}
Under step input $s$, the system \eqref{eqn:state_space} converge to a steady state with synchronized frequencies $\omega_i=\omega_j=:\omega_c$. Moreover, $\omega_c=0$ if and only if the power injection is balanced $\sum_{i\in\calN}s_i = 0$.
\end{cor}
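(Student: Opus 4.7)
The plan is to directly exploit the explicit trajectory formula \eqref{eqn:omega_evolution} from Theorem \ref{thm:response}, together with an identification of the zero Laplacian eigenvector $v_1$.

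First, I would take the $t\to\infty$ limit of \eqref{eqn:omega_evolution}. For every $i\ge 2$ we have $\lambda_i>0$, so both roots $\phi_{i,\pm}$ have strictly negative real part (their sum is $-\gamma<0$ and their product is $\lambda_i>0$), and the $i$-th term decays to $0$. For $i=1$ we have $\lambda_1=0$, hence $\phi_{1,+}=0$ and $\phi_{1,-}=-\gamma$, giving
\begin{IEEEeqnarray*}{rCl}
\lim_{t\to\infty}\omega(t)&=&\frac{\hat{s}_1}{\gamma}\,M^{-1/2}v_1.
\end{IEEEeqnarray*}

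Next, I would identify the direction $M^{-1/2}v_1$. Since $L=M^{-1/2}CBC^TM^{-1/2}$ with $B$ diagonal positive, the identity $Lv_1=0$ forces $\|B^{1/2}C^TM^{-1/2}v_1\|^2 = v_1^TLv_1 = 0$, hence $C^TM^{-1/2}v_1=0$. Because $\calG$ is connected, $\kernel(C^T)=\mathrm{span}(\bff{1}_n)$, so $M^{-1/2}v_1=\alpha\,\bff{1}_n$ for some constant $\alpha>0$ (with $\alpha$ fixed by $\|v_1\|=1$, namely $\alpha=(\sum_i M_i)^{-1/2}$). Thus the steady-state value of $\omega$ is a scalar multiple of $\bff{1}_n$, proving that all components synchronize to a common value $\omega_c=\alpha\hat{s}_1/\gamma$.

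For the second assertion, I would compute $\hat{s}_1$ in terms of the raw injections. Using the decomposition $s=\sum_i \hat{s}_i M^{1/2}v_i$ and the orthonormality of $\{v_i\}$, taking the inner product of $M^{-1/2}s$ with $v_1$ isolates $\hat{s}_1=\la M^{-1/2}s,v_1\ra$. Substituting $v_1=\alpha M^{1/2}\bff{1}_n$ gives
\begin{IEEEeqnarray*}{rCl}
\hat{s}_1 &=& \alpha\,\bff{1}_n^T s \;=\; \alpha\sum_{i\in\calN} s_i.
\end{IEEEeqnarray*}
Since $\alpha,\gamma>0$, the equivalence $\omega_c=0\iff \sum_{i\in\calN}s_i=0$ follows immediately.

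There is no real obstacle here; the only subtlety worth flagging is the consistent bookkeeping of the $M^{\pm 1/2}$ scalings, in particular recognizing that $v_1$ itself is \emph{not} proportional to $\bff{1}_n$ (only $M^{-1/2}v_1$ is), and that the coefficient $\hat{s}_1$ in the scaled-eigenvector decomposition of $s$ picks up the corresponding $M^{1/2}$ factor and cleanly reduces to the unweighted sum $\sum_i s_i$.
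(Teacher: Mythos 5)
Your proof is correct and follows essentially the same route as the paper's: take the $t\to\infty$ limit of \eqref{eqn:omega_evolution} to get the steady state $\frac{\hat{s}_1}{\gamma}M^{-1/2}v_1$, identify $v_1$ as the normalized $M^{1/2}\bff{1}_n$, and express $\hat{s}_1$ as a multiple of $\sum_{i\in\calN}s_i$. The only cosmetic difference is that you derive $M^{-1/2}v_1\propto\bff{1}_n$ explicitly via $\kernel(C^T)$, whereas the paper simply asserts the form of $v_1$; also note that $\|v_1\|=1$ only fixes $|\alpha|$, not its sign, but the equivalence $\omega_c=0\iff\sum_i s_i=0$ is unaffected since $\alpha\neq 0$.
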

\begin{proof}
It is easy to show 
$$v_1=\frac{M^{1/2}}{\sqrt{\sum_{j\in\calN} M_j}}\bff{1}_n$$
By Theorem \ref{thm:response}, we know the steady state of \eqref{eqn:state_space} is $(\hat{s}_1/\gamma)M^{-1/2}v_1$, which then has all entries equal to the same value 
$$\frac{\hat{s}_1}{\gamma\sqrt{\sum_j M_j}}$$
Therefore $\omega_i=\omega_j=:\omega_c$ for all $i,j\in\calN$. From $s=\sum_{i}\hat{s}_iM^{1/2}v_i$ we see $\hat{s}_1=(M^{-1/2}s)^Tv_1=s^TM^{-1/2}v_1$, and thus
\begin{IEEEeqnarray*}{rCl}
\sum_{i\in\calN}s_i &=& s^T\bff{1}_n=\sqrt{\sum_jM_j}s^TM^{-1/2}v_1=\sqrt{\sum_jM_j}\hat{s}_1\\
&=&\gamma\paren{\sum_jM_j}\omega_c=\paren{\sum_{j\in\calN}D_j}\omega_c
\end{IEEEeqnarray*}
Hence $\omega_c=0$ if and only if $\sum_{i\in\calN}s_i=0$.
\end{proof}


\subsection{Spectral transfer functions for arbitrary input}
It is also informative to look at the system behavior of \eqref{eqn:state_space} from the Laplace domain. Instead of analyzing transfer functions from any input to any output as in the classical multi-input-multi-output system analysis, we take a slightly different approach such that the Laplacian matrix spectral information is preserved. More precisely, for a time-variant surplus signal $s(t)$, we first decompose it to the spectral representation $s(t)=\sum_{i=1}^n\hat{s}_i(t)M^{1/2}v_i$. Now $\hat{s}_i(t)$ is a real-valued signal and thus assuming enough regularity, we can rewrite $\hat{s}_i(t)$ as the integral of exponential signals $e^{\tau t}$ through inverse Laplace transform. It can be shown that when the input to system \eqref{eqn:state_space} takes the form $e^{\tau t}M^{1/2}v_i$, the steady-state frequency trajectory $\ol{\omega}(t)$ is given by $H_i(\tau) e^{\tau t}M^{-1/2}v_i$, where $H_i(\tau)$ is a complex-valued function of $\tau$ specifying the system gain and phase shift. We refer to the function $H_i(\tau)$ as the $i$-th spectral transfer function. Compared to classical transfer functions, the spectral version does not capture the relationship between any input-output pair, but in contrast captures the behavior of system \eqref{eqn:state_space} from a network perspective. Once the spectral transfer functions are known, we can compute the steady-state trajectories for general input signal $s(t)$ through the following synthesis formula
$$
\ol{\omega}(t)=\sum_{i=1}^n \calL^{-1}\set{H_i(\tau)\calL\set{\hat{s}_i(t)}(\tau)}M^{-1/2}v_i
$$

\begin{thm}\label{thm:laplace}
For each $i$, assuming $\gamma^2-4\lambda_i\neq 0$, the $i$-th spectral transfer function is given by
$$
H_i(\tau)=\frac{\tau}{\tau^2+\gamma\tau + \lambda_i}
$$
$ $
\end{thm}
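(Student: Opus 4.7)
The plan is to verify the formula for $H_i(\tau)$ directly by substituting an exponential ansatz into the state-space dynamics, taking advantage of the fact that the scaled Laplacian eigenvector $v_i$ diagonalizes the ``cross-talk'' between the $\omega$-block and the $P$-block of the system matrix. This turns the coupled ODE system into a scalar algebraic equation in the unknown $H_i(\tau)$.

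Concretely, the first step is to feed in the input $s(t) = e^{\tau t}M^{1/2}v_i$, which corresponds to the spectral coefficients $\hat{s}_i(t)=e^{\tau t}$ and $\hat{s}_j(t)\equiv 0$ for $j\neq i$. Motivated by the eigenstructure exposed in Theorem \ref{thm:modes} (item 3), I would postulate the ansatz
\begin{IEEEeqnarray*}{rCl}
\omega(t) &=& H_i(\tau)\,e^{\tau t}\,M^{-1/2}v_i,\\
P(t) &=& g_i(\tau)\,e^{\tau t}\,BC^TM^{-1/2}v_i,
\end{IEEEeqnarray*}
for scalar functions $H_i(\tau)$ and $g_i(\tau)$ to be determined. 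Substituting $P(t)$ into the network flow equation $\dot P = BC^T\omega$ immediately gives the relation $\tau\,g_i(\tau) = H_i(\tau)$, so the branch-flow amplitude is pinned down by $H_i(\tau)$ up to the factor $1/\tau$.

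Next I would substitute the ansatz into the swing equation $M\dot\omega = -D\omega - CP + s$. Using $M^{-1}D = \gamma I_n$ and left-multiplying by $M^{-1/2}$ reduces the equation to
\begin{IEEEeqnarray*}{rCl}
\tau H_i(\tau)\,v_i &=& -\gamma H_i(\tau)\,v_i - \frac{H_i(\tau)}{\tau}\,M^{-1/2}CBC^TM^{-1/2}v_i + v_i.
\end{IEEEeqnarray*}
The key algebraic step is then the identity $M^{-1/2}CBC^TM^{-1/2} = L$, so that $M^{-1/2}CBC^TM^{-1/2}v_i = \lambda_i v_i$. Collecting the scalar coefficients of $v_i$ gives
\begin{IEEEeqnarray*}{rCl}
H_i(\tau)\paren{\tau + \gamma + \tfrac{\lambda_i}{\tau}} &=& 1,
\end{IEEEeqnarray*}
which solves to $H_i(\tau) = \tau/(\tau^2+\gamma\tau+\lambda_i)$ provided $\tau^2+\gamma\tau+\lambda_i \neq 0$ (i.e., $\tau$ is not one of the eigenvalues $\phi_{i,\pm}$ from Theorem \ref{thm:modes}).

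The only real subtlety, and what I would flag as the main obstacle, is justifying that the particular solution built from this ansatz is indeed the \emph{steady-state} response rather than just some solution of the inhomogeneous ODE. This follows from the stability analysis: by Corollary \ref{cor:stability} the homogeneous solutions in the $\omega$-block decay (the surviving marginal modes live in the $P$-block along $\kernel(C)$, which by construction is orthogonal to the $BC^TM^{-1/2}v_i$ direction in our ansatz), so any two solutions of the forced system agree asymptotically in their $\omega$-component. One can then read off the synthesis formula by expanding a general input $s(t)=\sum_i \hat s_i(t)M^{1/2}v_i$ along the spectral directions, Laplace-transforming each scalar component, multiplying by $H_i(\tau)$, and inverting, which yields the claimed expression for $\bar\omega(t)$.
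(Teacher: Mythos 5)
Your proof is correct, but it takes a genuinely different route from the paper's. The paper obtains $H_i$ as a byproduct of the time-domain machinery built for Theorem \ref{thm:response}: it decomposes $[M^{-1/2}v_i;0]$ along the eigenvectors $z_{i,\pm}$ of the system matrix with coefficients $\lambda_{i,\pm}$, evaluates the variation-of-constants convolution integral for the input $e^{\tau t}M^{1/2}v_i$ exactly, and then reads off $H_i(\tau)$ after simplifying with the identities $\lambda_{i,+}+\lambda_{i,-}=1$, $\lambda_{i,+}\phi_{i,-}+\lambda_{i,-}\phi_{i,+}=0$, $\phi_{i,+}+\phi_{i,-}=-\gamma$, $\phi_{i,+}\phi_{i,-}=\lambda_i$ and discarding the explicitly identified transient terms. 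You instead substitute a single-mode exponential ansatz directly into \eqref{eqn:state_space} and use $M^{-1/2}CBC^TM^{-1/2}=L$ to collapse the coupled dynamics to the scalar relation $H_i(\tau)\left(\tau+\gamma+\lambda_i/\tau\right)=1$. Your route is shorter, avoids the $z_{i,\pm}$ bookkeeping, and handles $i=1$ uniformly since $\lambda_1=0$ kills the branch-flow term; the price is that it only produces a particular solution, and you correctly flag the extra obligation of showing it is the steady state. Your resolution is sound: by Theorem \ref{thm:modes} and Corollary \ref{cor:stability}, every mode of $A$ other than the marginal ones of the form $[0;P]$ with $P\in\kernel(C)$ has strictly negative real part, so the $\omega$-component of any homogeneous solution decays and any two forced solutions agree asymptotically in $\omega$. (Your parenthetical orthogonality claim between $\kernel(C)$ and the direction $BC^TM^{-1/2}v_i$ is unnecessary and not literally true in the standard inner product because of the weighting by $B$; all that matters is that the marginal modes have zero $\omega$-component, which you also state.) The paper sidesteps this last step entirely because it computes the exact solution from the nominal initial state and can point to the transient terms explicitly.
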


\begin{proof}
\iftoggle{isreport}{
Please refer to Appendix \ref{proof:laplace}.
}{
Please refer to our online report \cite{report}.
}
\end{proof}

We remark that a similar formula also shows up in \cite{paganini2017global} as the representative machine transfer function for swing dynamics.

\section{Interpretations}\label{section:interpretations}
In this section, we present a collection of intuition that can be devised from the results in Section \ref{section:characterization}. They are useful for making general inferences and for the controller design in Section \ref{section:design}.
\subsection{Network connectivity and system stabilization}\label{subsection:mixing}
We first clarify how the network connectivity affects the system stability. Towards this goal, we rewrite \eqref{eqn:omega_evolution} as
$$
\omega(t)=\sum_{i=1}^n\hat{s}_i \hat{\omega}^i(t)M^{-1/2}v_i
$$
The signal $\hat{\omega}^i(t)$ captures the response of system \eqref{eqn:state_space} along $M^{-1/2}v_i$ to a step function input. By Theorem \ref{thm:modes}, we see that whether the system oscillates or not is determined by the signs of $\gamma^2-4\lambda_i$. For $\lambda_i$ such that $\gamma^2-4\lambda_i> 0$, we have
$$
\hat{\omega}^i(t)=\frac{1}{\sqrt{\gamma^2-4\lambda_i}}\paren{e^{\phi_{i,+}t}-e^{\phi_{i,-}t}}
$$
with $\phi_{i,\pm}\le 0$. Thus the system is over-damped along $M^{-1/2}v_i$, and deviations along $M^{-1/2}v_i$ exponentially fades away without oscillation. The slower-decaying exponential has a decaying rate determined by $\phi_{i,+}$, which is a decreasing function in $\lambda_i$. Thus a larger $\lambda_i$ implies faster decay. Intuitively, this tells us that when the system damping is strong with respect to its inertia, adding connectivity helps move more disturbances to the damping component so that disturbances can be absorbed sooner.

For $\gamma$ such that $\gamma^2-4\lambda_i<0$, we have
$$
\hat{\omega}^i(t)=\frac{2}{\sqrt{4\lambda_i-\gamma^2}}e^{-\frac{\gamma}{2}t}\sin\paren{\frac{\sqrt{4\lambda_i-\gamma^2}}{2}t}
$$
Thus the system is under-damped along $M^{-1/2}v_i$ and oscillations do occur. We also note that larger values of $\lambda_i$ lead to oscillations of higher frequency.
This intuitively can be interpreted as the following: When the system damping is not strong enough compared to its inertia, adding connectivity causes the un-absorbed oscillations to propagate throughout the network faster, bringing disturbances to the already over-burdened damping components, making the system oscillate in a higher frequency.

We thus see that Theorem \ref{thm:modes} and Theorem \ref{thm:response} precisely clarify our seemingly contradictory intuition on whether connectivity is beneficial to stabilization - it depends on how strong the system is damped compared to its inertia, i.e.~how fast the system can dissipate energy.
\subsection{Robustness to disturbance}\label{subsection:laplace}
The impact of different system parameters in the Laplace domain can be understood from the spectral transfer functions $H_i$. Recall by Theorem \ref{thm:laplace}, for a signal of the form $s(t)=e^{\tau t}v_i$, the steady state output signal of \eqref{eqn:state_space} is
$$
\ol{\omega}_i(t;\tau)=\frac{\tau e^{\tau t}}{\tau^2+\gamma \tau + \lambda_i}M^{-1}v_i
$$
In particular, if we focus on the $j$-th component of $\ol{\omega}_i(t)$, which corresponds to the frequency trajectory of bus $j$, we have
$$
\ol{\omega}_{i,j}(t;\tau)=\frac{\tau v_{i,j} e^{\tau t}}{M_j\tau^2+D_j\tau +\lambda_i M_j}
$$
Under the proportional rating assumption mentioned in Section \ref{section:model}, one can show that $\lambda_i M_j =\ol{\lambda}_i$, where $\ol{\lambda}_i$ is the $i$-th Laplacian eigenvalue when the ``heaviest'' generator is normalized to have unit inertia $\max_{j\in\calN}M_j=1$ and can be interpreted as the pure topological part in the Laplacian eigenvalues $\lambda_i$. This allows us to compute
\begin{equation}\label{eqn:laplace_gain}
\abs{\ol{\omega}_{i,j}(\mathrm{j}\sigma)}=\frac{\abs{\sigma}}{\sqrt{M_j^2\sigma^4+(D_j^2-2\ol{\lambda_i}M_j)\sigma^2 + \ol{\lambda}_i^2}}
\end{equation}
and conclude the following (See Fig.~\ref{fig:spectral} for an illustration): a) For high frequency signals, the gain can be approximated by $\abs{\ol{\omega}_{i,j}(\mathrm{j}\sigma)}\approx \frac{1}{M_j\sigma}$ and therefore the key parameter to suppress such disturbence is the rotational inertia $M_j$; b) For small frequency signals, the gain approximates to $\abs{\ol{\omega}_{i,j}(\mathrm{j}\sigma)}\approx \frac{\sigma}{\ol{\lambda}_i}$ and hence low frequency disturbences are mostly suppressed by the network topology; c) For any fixed frequency $\sigma$, $\abs{\ol{\omega}_{i,j}(\mathrm{j}\sigma)}$ is decreasing in $D_j$. This means that a larger damping leads to smaller gains for all frequencies. Such decrease, however, is negligible for very large or very small $\sigma$, and therefore increasing $D_j$ mostly helps the system to suppress oscillations in the medium frequency band.
\begin{figure}
\centering
\includegraphics[scale=.4]{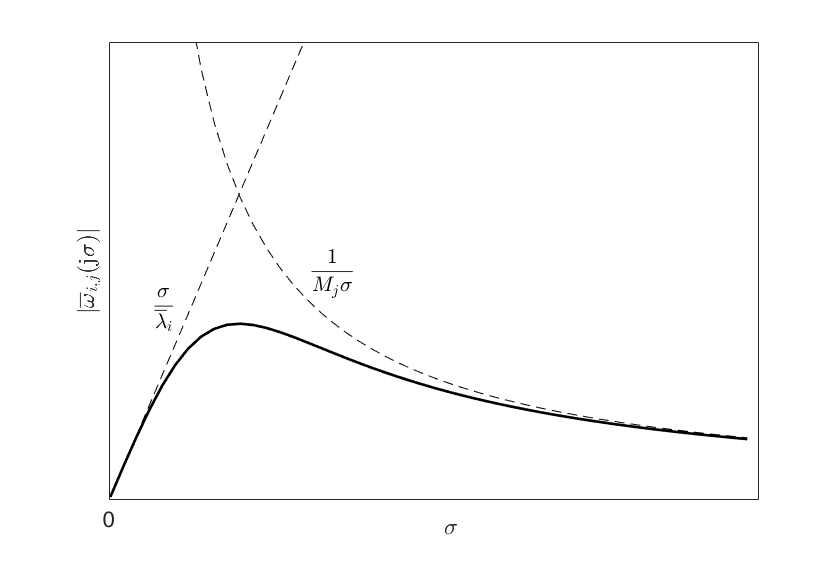}
\caption{Illustration of the gain $|\ol{\omega}_{i,j}(\mathrm{j}\sigma)|$ as a funciton of $\sigma$.}\label{fig:spectral}
\end{figure}
\subsection{Impact of damping}\label{subsection:general}
As a by-product of our study, we can also examine how the system damping impacts the system performance. Towards this goal, we study two common metrics for $\hat{\omega}^i(t)$: a) settling time, which is the time it takes $\hat{\omega}^i(t)$ to get within a certain range\footnote{The range is specified as $[\omega_i^*-c, \omega_i^*+c]$, where $\omega^*$ is the equilibrium state and $c$ is a constant.} around the steady state; b) nadir, which is defined to be the sup norm of $\hat{\omega}^i(t)$. 
 Table \ref{table:system_metrics} summarises the formulae\footnote{We define $\Delta_i=\abs{\gamma^2-4\lambda_i}$ to simplify the formulae. The settling time formula is an upper bound as finding its exact value requires solving transcendental equations, which is generally hard.} for these metrics, 
and one can show using basic calculus that both the settling time and nadir are decreasing functions of $\gamma$, and thus decreasing in the damping constants $D_j$ (provided that the inertia constants $M_j$ are fixed).
\begin{table*}
\centering
\caption{System performance in terms of network Laplacian eigenvalues, generator inertia and damping.}\label{table:system_metrics}
\begin{tabular}{|c |c | c |}
\hline
Case & Settling Time & Nadir \\
\hline
$\gamma^2>4\lambda_i$ & $\frac{1}{\gamma-\sqrt{\Delta_i}}\ln\paren{\frac{1}{4c^2\Delta_i}}$
&
$
\frac{1}{\sqrt{\Delta_i}}\Bigg[\paren{\frac{\gamma+\sqrt{\Delta_i}}{\gamma-\sqrt{\Delta_i}}}^{\frac{-\gamma+\sqrt{\Delta_i}}{2\sqrt{\Delta_i}}}
-\paren{\frac{\gamma+\sqrt{\Delta_i}}{\gamma-\sqrt{\Delta_i}}}^{\frac{-\gamma-\sqrt{\Delta_i}}{2\sqrt{\Delta_i}}}\Bigg]
$\\
\hline
$\gamma^2<4\lambda_i$ &$\frac{1}{\gamma}\ln\paren{\frac{4}{c^2\Delta_i}}$ & $\frac{2}{\sqrt{\Delta_i}}\exp\paren{-\frac{2\pi \gamma}{\sqrt{\Delta_i}}}$\\
\hline
\end{tabular}
\end{table*}

This result of course does not generalize to $\omega(t)$ in a straightforward way because of the possibility of negative $\hat{s}_i$. Instead of focusing on $\omega(t)$ for a specific $s(t)$, we can look at all possible $\omega(t)$ and generalize our previous interpretations to the worst case performance metric. To be concrete, let us take nadir as an example. By \eqref{eqn:omega_evolution}, we see the nadir of frequency trajectory at bus $j$ satisfies
\begin{IEEEeqnarray*}{rCl}
\norm{\omega_j(t)}_{\infty} &\le& M_j^{-1/2}\sum_{i=1}^n\abs{\hat{s}_i}\norm{v_{i,j}\hat{\omega}^i(t)}_{\infty}\\&\le& M_j^{-1/2}\sqrt{\sum_{i=1}^n\abs{\hat{s}_i}^2}\sqrt{\sum_{i=1}^n\norm{v_{i,j}\hat{\omega}^i(t)}_{\infty}^2}\\
&=& M_j^{-1/2}\norm{M^{-1/2}s}_2\sqrt{\sum_{i=1}^n\norm{v_{i,j}\hat{\omega}^i(t)}_{\infty}^2}\\
&:=&M_j^{-1/2}\norm{M^{-1/2}s}_2\norm{\omega}_\infty^w
\end{IEEEeqnarray*}
It is easy to see that all the inequalities above can attain equalities. Therefore among all input $s$ with scaled unit energy $\norm{M^{-1/2}s}_2=1$, the worst possible nadir is $M_j^{-1/2}\norm{\omega}_\infty^w$, which is a decreasing function of $\delta$ from our previous discussions.

This worst case nadir is a system level metric that is independent of the input. Although this metric does not predict the exact nadir for any specific input, it does reveal to what extent the system can tolerate disturbances of certain energy, which is a property that is intrinsic to the system itself. Moreover, for secure and robust operation of the grid, we need to make sure that the worst case nadir is well-controlled. Similar argument can be also applied to the settling time for $\omega_j(t)$.
\subsection{System tradeoffs}
When choosing system control parameters, there are usually tradeoffs among different performance goals and we must balance different aspects to obtain a good design. A key tradeoff of this type revealed by our previous disucssions is the tradeoff between having small network intrinsic frequency and improving system robustness against low frequency disturbance.

More specifically, it is easy to show that $\abs{\ol{\omega}_{i,j}(\mathrm{j}\sigma)}$ is maximized at $\sigma^*_{i,j}=\sqrt{\frac{\ol{\lambda}_i}{M_j}}$. In other words $\sigma^*_{i,j}$ can be interpreted as an intrinsic frequency of the network and oscillations around $\sigma^*_{i,j}$ are amplified at bus $j$ through the transmission system. Typically high frequency oscillations should be suppressed and thus we want smaller $\sigma^*_{i,j}$, which in turn leads to smaller $\ol{\lambda}_i$. On the other hand, we have shown that in order for the system \eqref{eqn:state_space} to be robust against low frequency noise (such as periodic load oscillations within a day), the transmission network should be designed with as large connectivity $\ol{\lambda}_i$ as possible. As a result, we cannot make the system \eqref{eqn:state_space} having small intrinsic frequency and being robust against low frequency noise at the same time.

\section{Controller Design for Load-Side Participation}\label{section:design}
In this section, we discuss two implications of our results in Section \ref{section:interpretations} to load-side controller design.
\subsection{Benefits of load-side participation}
We adopt the controller design from \cite{zhao2014design} as an example to explain the benefits of load-side participation. We assume the system deviation is small so that the capacity bounds of load side controllers are not binding. In this setting, the control law of \cite{zhao2014design} is simplified to
\begin{equation}\label{eqn:control_law}
d_j=K_p\omega_j
\end{equation}
which when plugged into \eqref{eqn:state_space} can be absorbed into the damping term $D_j \omega_j$. Therefore the integration of controller \eqref{eqn:control_law} effectively increases the system damping level. Based on our discussions in Section \ref{section:interpretations}, we conclude that load-side participation decreases both the settling time and nadir of \eqref{eqn:state_space}. This means with load-side participation, the system \eqref{eqn:state_space} is more responsive and its nadir under a disturbance is also better controlled.

Such benefits have been observed and confirmed in a series of work \cite{zhao2012frequency,zhao2012swing,zhao2014design,mallada2014optimal,zhao2016unified} in their simulations. With our framework, it is possible to analytically derive such results and quantify how beneficial the load-side integration can be when we use a certain system gain $K_p$. Moreover, it is observed in \cite{zhao2012frequency} that load-side participation also helps maintain system stability when the generator output fluctuates. Using our characterization in the Laplace domain, we see that such benefit comes from the improved system ability in suppressing oscillations of medium band frequency.
\subsection{Proportional-Derivative (PD) controller}
Despite the many benefits of load-side controllers we have explained so far, one component still missing in \eqref{eqn:control_law} is that they only affect the system damping but cannot increase the system inertia. As we mentioned in Section \ref{section:interpretations}, the system inertia is the key parameter affecting the system robustness against high frequency oscillations. 
Nevertheless, a quick glimpse to \eqref{eqn:state_space} suggests that in order to have larger $M_j$, it suffices to add a derivative term in \eqref{eqn:control_law}, which can be implemented through power electronics or invertors \cite{mallada2016idroop}:
\begin{equation}\label{eqn:new_control_law}
d_j=K_p\omega_j+K_d\dot{\omega}_j
\end{equation}
Although it is a natural idea to generalize proportional controllers to PD controllers for performance tuning, we see that the need of this derivative term can actually be reversed engineered from our characterizations. Moreover, our framework reveals how the parameters $K_p$ and $K_d$ affect the system performance precisely, allowing us to optimize such gains subject to different design goals.

Using derivative terms in controller design is often problematic in practice due to the amplified noise in its measurement. However, we know from Section \ref{section:interpretations} that neither adding damping nor increasing network connectivity is particularly effective in suppressing disturbences in the high frequency regime. Thus in order to improve the grid stability under high frequency fluctuations, having certain components of the network that are able to measure the signal derivatives either explicitly or implicitly to provide the necessary inertia is inevitable. We are still investigating the optimal tradeoff between the increased virtual inertia yet also the amplified noise from using PD controllers.  

\section{Evaluation}\label{section:evaluation}
In this section, we simulate the controller design \eqref{eqn:new_control_law} over the IEEE 39-bus New England interconnection system as shown in Fig.~\ref{fig:line_diagram} and compare its performance to that of \eqref{eqn:control_law} and the conventional droop control. There are 10 generators and 29 load nodes in the system and we take the system parameters from the Matpower Simulation Package \cite{zimmerman2011matpower}. In contrast to our theoretical analysis, the simulation data does not satisfy the proportional rating assumption in Section \ref{section:model}. The droop control is implemented as the $D_j\omega_j$ term for generator buses and is deactivated for simulations with the controllers \eqref{eqn:control_law} and \eqref{eqn:new_control_law}. We assume all the buses (including the generator buses) have load-side participation enabled and pick the controller gains $K_p$ and $K_d$ heterogeneously in proportional to the bus damping $D_j$.
\begin{figure}
\centering
\includegraphics[scale=.85]{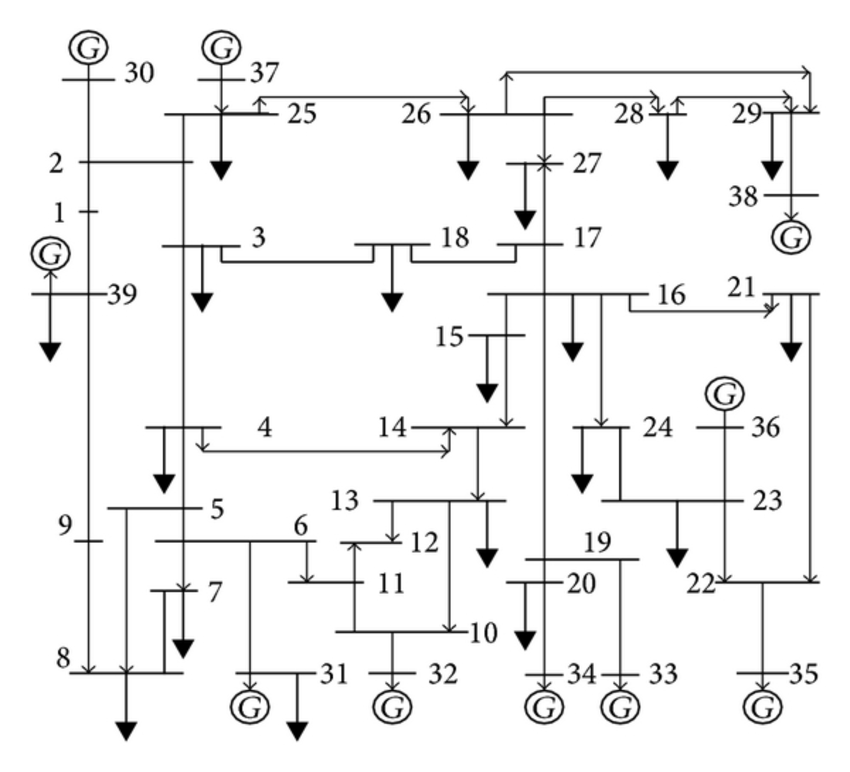}
\caption{Line diagram of the IEEE 39-bus interconnection testbed.}\label{fig:line_diagram}
\end{figure}
\subsection{Robustness against measurement noise}
We first look at the controller performances against measurement noise. Towards this goal, we add a white Gaussian measurement noise of power $-20$ dBW to the frequency sensor at bus 30 and observe its frequency trajectory, which is shown in Fig.~\ref{fig:white_noise}. We can see that the controller \eqref{eqn:control_law} is less prone to measurement noise compared to the conventional droop control, because it increases the system damping level and therefore helps suppress the medium frequency part of the noise. However, its benefit in suppressing high frequency noise is limited, as one can see from its performance gap compared with the controller \eqref{eqn:new_control_law}. To more clearly see such distinction, we replace the measurement noise at bus 30 with the signal $0.2\sin(10\pi t)$ p.u.~that contains only high frequency component and observe its trajectory. The result is shown in Fig.~\ref{fig:sine}. In this case, we see that controller \eqref{eqn:control_law} performs nearly the same as the conventional droop control, while the system under the improved controller \eqref{eqn:new_control_law} exhibits much smaller oscillation.
\begin{figure}
\centering
\includegraphics[scale=.6]{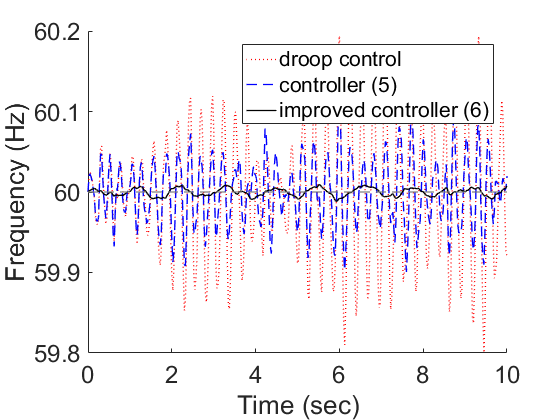}
\caption{Frequency trajectory at bus 30 when we add white Gaussian measurement noise of $-20$ dBW.}\label{fig:white_noise}
\end{figure}
\begin{figure}
\centering
\includegraphics[scale=.6]{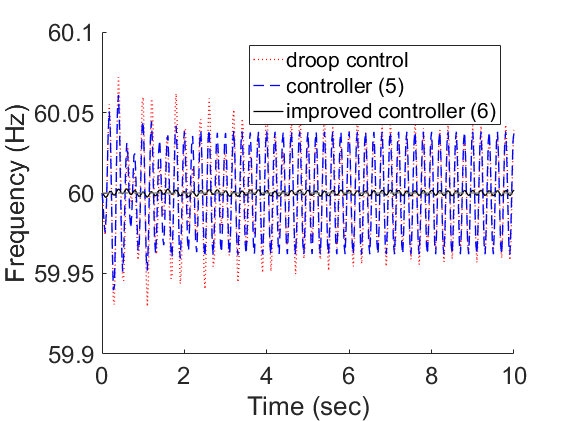}
\caption{Frequency trajectory at bus 30 when we add a signal following the sine curve $0.2\sin(10\pi t)$ p.u.}\label{fig:sine}
\end{figure}
\subsection{Wind power data}
Next, we look at the performance of the controllers under real wind power generation data from \cite{xu2011demand}. We choose bus 30 to be the wind generator, whose output follows the profile given in \cite{xu2011demand} and look at the frequency trajectory at bus 36. The two buses are specifically chosen to be geographically far away so that the simulation results reflect end user perception of such renewable penetration. The simulation result is shown in Fig.~\ref{fig:wind}. As one can see, compared to controller \eqref{eqn:control_law}, the improved controller \eqref{eqn:new_control_law} incurs smaller frequency deviation at almost all time, and the resulting trajectory is smoother. This is because \eqref{eqn:new_control_law} filters away high frequency fluctuations in the generator profile. We expect such benefit to be more significant when the  system aggregate load fluctuates more frequently because of increasing renewable penetration.
\begin{figure}
\centering
\includegraphics[scale=.6]{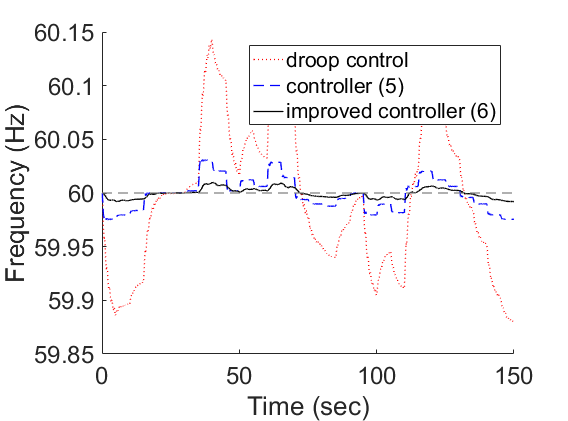}
\caption{Frequency trajectory at bus 36 under wind power output at bus 30.}\label{fig:wind}
\end{figure}

\section{Conclusion}\label{section:conclusion}
In this work, we proposed a framework using spectral graph theory that captures the interplay among different system parameters. It leads to precise characterizations on how control parameters affect the system performance and allows us to make general inferences without extensive simulation. We quantified the benefits of load-side participation within this framework and explained how we can improve the controller design so that the system is more robust against high frequency oscillations.

We remark that our framework can be generalized to include secondary frequency controllers. In particular, load-side controllers for secondary frequency regulation can usually be locally interpreted as the control law \eqref{eqn:control_law} plus a term that captures the overall supply-demand imbalance from other parts of the network. Therefore in terms of system stabilization, all of our discussion about how load-side participation helps system \eqref{eqn:state_space} in both transient and steady state will still apply. In terms of driving the system back to the nominal state, the framework explains how the cyber and physical network topologies interact with each other and suggest methods to improve the overall system convergence rate. Due to space limitation, we refer interested readers to \cite{guo2018cyber} for more detailed discussions. We are still investigating how our results can be generalized to more detailed models (say where the generators have higher order or nonlinear dynamics).

\bibliographystyle{IEEEtran}
\bibliography{biblio}

\iftoggle{isreport}{
\appendix
\subsection{Proof of Theorem \ref{thm:modes}}\label{proof:modes}
By Schur complement, we can compute the characteristic polynomial of $A$ as
\begin{IEEEeqnarray*}{rCl}
&&\det(A-tI)\\
&=&\det(-tI_m)\det\paren{-\gamma I_n-tI_n - \frac1t M^{-1}CBC^T}\\
&=&\det(-tI_m)\det\paren{M^{-1/2}}\det\paren{M^{1/2}}\\
&&\times\det\paren{M^{1/2}(-\gamma I_n-tI_n - \frac1t M^{-1}CBC^T)M^{-1/2}}\\
&=&(-1)^{m+n}t^{m-n}\det\paren{L+(\gamma t+ t^2)I_n}\nonumber
\end{IEEEeqnarray*}
All the above algebra is understood to be over the polynomial field generated from $\R[t]$ and thus we do not need to assume $t\neq 0$.

The term $t^{m-n}$ contributes $m-n$ multiplicity to the eigenvalue 0 (in the case $\calG$ is a tree, or equivalently $m=n-1$, this is understood to mean that $t^{m-n}=t^{-1}$ cancels one multiplicity of 0). Let us now tackle the factor $\det\paren{L+(\gamma t+t^2)I_n}$. It is easy to see that $\det\paren{L+(\gamma t+ t^2)I_n}=0$ if and only if
$$
t^2 + \gamma t +\lambda= 0
$$
for some eigenvalue $\lambda$ of $L$. Therefore the roots of $\det\paren{L+(\gamma t+t^2)I_n}$ are given as
\begin{equation}\label{eqn:roots_of_system_matrix}
\phi = \frac{-\gamma \pm \sqrt{\gamma^2-4\lambda}}{2}
\end{equation}
with $\lambda$ traversing all eigenvalues of $L$. Among these roots, $0$ appears exactly once, coming from the zero eigenvalue of $L$. Thus altogether, we know the eigenvalue of $A$ consists of $0$ with multiplicity $m-n+1$ and non-zero roots of the form given by \eqref{eqn:roots_of_system_matrix}.

Next we determine the eigenvectors of the system matrix $A$. Let $\phi\neq 0$ be an eigenvalue corresponding to an eigenvector $[\omega; P]$. Then we have
\begin{subequations}
\begin{IEEEeqnarray}{rCl}
-\gamma\omega - M^{-1}CP&=&\phi\omega\label{eqn:first_eig_const}\\
BC^T\omega &=& \phi P \label{eqn:second_eig_const}
\end{IEEEeqnarray}
\end{subequations}
Substituting \eqref{eqn:second_eig_const} to \eqref{eqn:first_eig_const} and multipliying $M^{1/2}$ on both sides, we see that
$$
LM^{1/2}\omega = -(\phi^2+\gamma\phi)M^{1/2}\omega
$$
or in other words, $M^{1/2}\omega$ is an eigenvector of $L$ affording $-(\phi^2+\gamma\phi)=\lambda$. For any such $\omega$, the corresponding $P$ by \eqref{eqn:second_eig_const} is given by $P=\phi^{-1}BC^T\omega$. Moreover, we see that $\phi=-\gamma$ is a simple eigenvalue of $A$ as the corresponding $\lambda=0$ is a simple eigenvalue of $L$. Note that
$$
\norm{B^{1/2}C^TM^{-1/2}v_1}^2 = v_1^TLv_1 = 0
$$
implies $BC^TM^{-1/2}v_1=0$, and therefore we see that
$$
\bracket{M^{-1/2}v_1; -\frac{1}{\gamma} BC^TM^{-1/2}v_1}=\bracket{M^{-1/2}v_1;0}
$$
is an eigenvector of $A$ affording $\phi=-\gamma$.

For $\phi=0$, from \eqref{eqn:second_eig_const} we have $\omega = cI_n$ for some $c$. Plugging back to \eqref{eqn:first_eig_const}, we have
$$
cI_n = -\frac{1}{\gamma}M^{-1}CP 
$$
and therefore $c\bff{1}^TMI_n=0$ which implies $c=0$. This then implies $P\in\kernel(C)$ and $\omega=0$. Therefore the eigenvectors corresponding to $\phi=0$ are given by $[0; P]$ with $P\in \kernel(C)$, which by dimension theorem has dimension $m - \rank(C)=m-n+1$.
\qedd

\subsection{Proof of Theorem \ref{thm:response}}\label{proof:response}
Recall we have shown in Section \ref{section:characterization} that $A$ is diagonalizable over the complex field $\C$, provided critical damping does not occur. Let $A=Q\Lambda Q^{-1}$ be an eigenvalue decomposition of $A$. Then we have $e^{At}=Qe^{\Lambda t}Q^{-1}$ for any $t\in\R$. Now the solution to the system \eqref{eqn:state_space} with a constant input $s$ and nominal initial state is given as
\begin{IEEEeqnarray}{rCl}
x(t)&=&\int_0^t \paren{e^{A(t-\tau)}\begin{bmatrix}
M^{-1}s\\0
\end{bmatrix}}d\tau \nonumber \\
&=&Q\int_0^t e^{\Lambda(t-\tau)}d\tau Q^{-1}\begin{bmatrix}
M^{-1}s\\0
\end{bmatrix}\label{eqn:general_solution}
\end{IEEEeqnarray}
Write $\Lambda$ in block diagonal form as
$$
\Lambda = \begin{bmatrix}
0 & 0\\
0 & \Phi
\end{bmatrix}
$$
where $\Phi$ collect all nonzero eigenvalues of $A$. By Theorem \ref{thm:modes}, we can compute \eqref{eqn:general_solution} to
\begin{IEEEeqnarray}{rCl}
&&x(t)\nonumber\\
&=&Q\int_0^t\begin{bmatrix}
(t-\tau)I_{m-n+1} & 0 \\
0 & e^{\Phi (t-\tau)}
\end{bmatrix}d\tau Q^{-1}\begin{bmatrix}
M^{-1}s\\0
\end{bmatrix}\nonumber\\
&=&Q\begin{bmatrix}
\frac{t^2}{2}I_{m-n+1} & 0 \\
0 & \Phi^{-1}(e^{\Phi t}-I_{2n-1})
\end{bmatrix} Q^{-1}\nonumber\\
&&\times\begin{bmatrix}
M^{-1}s\\0
\end{bmatrix}
\label{eqn:general_solution_sim}
\end{IEEEeqnarray}
Consider an eigen-pair $(\lambda_i, v_i)$ of $L$. For $i=1$, we have $\lambda_1=0$ and therefore by Theorem \ref{thm:modes}, we see
$
\bracket{M^{-1/2}v_1; 0}
$
is an eigenvector of $A$ affording $-\gamma$. For $i\ge 2$, we know
$$
\phi_{i,+}:=\frac{-\gamma+\sqrt{\gamma^2-4\lambda_i}}{2} \quad \phi_{i,-}:=\frac{-\gamma-\sqrt{\gamma^2-4\lambda_i}}{2}
$$
are eigenvalues of $A$ with corresponding eigenvectors $[M^{-1/2}v_i; \phi_{i,\pm}^{-1}BC^T M^{-1/2}v_i]=:z_{i,\pm}$. This allows us to decompose
\begin{IEEEeqnarray*}{rCl}
\begin{bmatrix}
M^{-1/2}v_i\\
0
\end{bmatrix} &=& \frac{\sqrt{\gamma^2-4\lambda_i}-\gamma}{2\sqrt{\gamma^2-4\lambda_i}}\begin{bmatrix}
M^{-1/2}v_i\\
\phi_{i,+}^{-1}BC^TM^{-1/2}v_i
\end{bmatrix}\\
&& + \frac{\sqrt{\gamma^2-4\lambda_i}+\gamma}{2\sqrt{\gamma^2-4\lambda_i}}\begin{bmatrix}
M^{-1/2}v_i\\
\phi_{i,-}^{-1}BC^TM^{-1/2}v_i
\end{bmatrix}\\
&=:&\lambda_{i,+}z_{i,+}+ \lambda_{i,-}z_{i,-}
\end{IEEEeqnarray*}
which then implies
\begin{IEEEeqnarray*}{rCl}
\begin{bmatrix}
M^{-1}s\\0
\end{bmatrix} &=& \sum_{i=1}^n\hat{s}_i\begin{bmatrix}
M^{-1/2}v_i\\0
\end{bmatrix}\\
&=&\hat{s}_1
\begin{bmatrix}
M^{-1/2}v_1\\ 0
\end{bmatrix} + \sum_{i=2}^n\lambda_{i,+}\hat{s}_iz_{i,+}\\
&&+ \sum_{i=2}^n\lambda_{i,-}\hat{s}_iz_{i,-}\end{IEEEeqnarray*}
We emphasize that the input to system \eqref{eqn:state_space} is $s$, and $[M^{-1}s;0]$ is the signal obtained by multiplying the input scaling matrix $[M^{-1};0]$ to $s$. By linearity, we can compute \eqref{eqn:general_solution_sim} as
\begin{IEEEeqnarray*}{rCl}
x(t)&=&-\frac{\hat{s}_1}{\gamma}e^{-\gamma t}\begin{bmatrix}
M^{-1/2}v_1\\ 0
\end{bmatrix} + \sum_{i=2}^n\frac{\lambda_{i,+}\hat{s}_i}{\phi_{i,+}}e^{\phi_{i,+}t}z_{i,+}\\
&&+ \sum_{i=2}^n\frac{\lambda_{i,-}\hat{s}_i}{\phi_{i,-}}e^{\phi_{i,-}t}z_{i,-}+\frac{\hat{s}_1}{\gamma}\begin{bmatrix}
M^{-1/2}v_1\\ 0
\end{bmatrix} \\
&&- \sum_{i=2}^n\frac{\lambda_{i,+}\hat{s}_i}{\phi_{i,+}}z_{i,+}- \sum_{i=2}^n\frac{\lambda_{i,-}\hat{s}_i}{\phi_{i,-}}z_{i,-}
\end{IEEEeqnarray*}
One can check by direct computation that for $i\ge 2$,
$$
\frac{\lambda_{i,+}}{\phi_{i,+}}+\frac{\lambda_{i,-}}{\phi_{i,-}}=0
$$
Therefore when restricting to the $\omega$ part in $x$, we have
\begin{IEEEeqnarray*}{rCl}
M^{1/2}\omega(t)
&=&\frac{\hat{s}_1}{\gamma}v_1-\frac{\hat{s}_1}{\gamma}e^{-\gamma t}v_1
+ \sum_{i=2}^n\frac{\lambda_{i,+}\hat{s}_i}{\phi_{i,+}}e^{\phi_{i,+}t}v_i\\
&&+ \sum_{i=2}^n\frac{\lambda_{i,-}\hat{s}_i}{\phi_{i,-}}e^{\phi_{i,-}t}v_i
\end{IEEEeqnarray*}
This together with the fact
$$
\frac{\lambda_{i,\pm}}{\phi_{i,\pm}}=\pm\frac{1}{\sqrt{\gamma^2-4\lambda_i}}
$$
and the observation that
$$
\frac{\hat{s}_1}{\gamma}v_1-\frac{\hat{s}_1}{\gamma}e^{-\gamma t}v_1=\frac{\hat{s}_1}{\sqrt{\gamma^2-4\lambda_1}}\paren{e^{\phi_{1,+}t}-e^{\phi_{1,-}t}}v_1
$$
completes the proof.
\qedd

\subsection{Proof of Theorem \ref{thm:laplace}}\label{proof:laplace}
First consider $i\ge 2$. From the proof of Theorem \ref{thm:response}, we know 
\begin{equation}\label{eqn:eigen_decomposition}
\begin{bmatrix}
M^{-1/2}v_i\\0
\end{bmatrix}=\lambda_{i,+}z_{i,+}+\lambda_{i,-}z_{i,-}
\end{equation}
This together with the calculation in \eqref{eqn:general_solution} implies that for input signal of the form $s_i(t)M^{1/2}v_i$, the system response of \eqref{eqn:state_space} is given as
\begin{IEEEeqnarray*}{rCl}
x(t)&=&\lambda_{i,+}\int_0^t e^{\phi_{i,+}(t-\tau')}s_i(\tau')d\tau' z_{i,+}\\
&& +\lambda_{i,-}\int_0^t e^{\phi_{i,-}(t-\tau')}s_i(\tau')d\tau' z_{i,-}
\end{IEEEeqnarray*}
For $s_i(t)=e^{\tau t}$, we then have
\begin{IEEEeqnarray*}{rCl}
x(t)&=&\frac{\lambda_{i,+}}{\tau-\phi_{i,+}}\paren{e^{\tau t} - e^{\phi_{i,+}t}}z_{i,+}\\
&& +\frac{\lambda_{i,-}}{\tau-\phi_{i,-}}\paren{e^{\tau t} - e^{\phi_{i,-}t}}
z_{i,-}
\end{IEEEeqnarray*}
and therefore when restricting to the frequency trajectory, we have
\begin{IEEEeqnarray*}{rCl}
M^{1/2}\omega(t)&=&\frac{\lambda_{i,+}(\tau-\phi_{i,-})+\lambda_{i,-}(\tau-\phi_{i,+})}{\tau^2-(\phi_{i,+}+\phi_{i,-})\tau +\phi_{i,+}\phi_{i,-}}e^{\tau t}v_i\\
&&-\paren{\frac{\lambda_{i,+}e^{\phi_{i,+}t}}{\tau-\phi_{i,+}} + \frac{\lambda_{i,-}e^{\phi_{i,-}t}}{\tau-\phi_{i,-}}}v_i
\end{IEEEeqnarray*}
Noting $\lambda_{i,+}+\lambda_{i,-}=1$, $\lambda_{i,+}\phi_{i,-}+\lambda_{i,-}\phi_{i,+}=0$, $\phi_{i,+}+\phi_{i,-}=-\gamma$ and $\phi_{i,+}\phi_{i,-}=\lambda_i$ and dropping transient terms, we see
$$
\ol{\omega}(t)=\frac{\tau}{\tau^2+\gamma\tau + \lambda_i}e^{\tau t} M^{-1/2}v_i
$$
For $i=1$, we do not need to decompose the signal as in \eqref{eqn:eigen_decomposition} and a similar calculation leads to
$$
\ol{\omega}(t)=\frac{e^{\tau t}M^{-1/2}v_1}{\tau+\gamma}=\frac{\tau}{\tau^2+\gamma\tau + \lambda_1}e^{\tau t} M^{-1/2}v_1
$$
where the last equality is because $\lambda_1=0$.
\qedd
}{}

\end{document}